\newtheorem{theorem}{Theorem}
\newtheorem{lemma}[theorem]{Lemma}
\newtheorem{definition}{Definition}
\newtheorem{corollary}[theorem]{Corollary}
\newtheorem{remark}[theorem]{Remark}
\def\ps@pprintTitle{%
	\let\@oddhead\@empty
	\let\@evenhead\@empty
	\def\@oddfoot{\centerline{\thepage}}%
	\let\@evenfoot\@oddfoot}
\begin{document}

\title{Robust Circuitry-Based Scores of Structural Importance of Human Brain Areas}
	
\author[p]{Dániel Hegedűs}
\ead{hegedus@pitgroup.org}
\author[p,u]{Vince Grolmusz\corref{cor1}}
\ead{grolmusz@pitgroup.org}
\cortext[cor1]{Corresponding author}
\address[p]{PIT Bioinformatics Group, Eötvös University, H-1117 Budapest, Hungary}
\address[u]{Uratim Ltd., H-1118 Budapest, Hungary}

\date{}

\begin{abstract}
	We consider the 1015-vertex human consensus connectome computed from the diffusion MRI data of 1064 subjects. We define seven different orders on these 1015 graph vertices, where the orders depend on parameters derived from the brain circuitry, that is, from the properties of the edges (or connections) incident to the vertices ordered. We order the vertices according to their degree, the sum, the maximum, and the average of the fiber counts on the incident edges, and the sum, the maximum and the average length of the fibers in the incident edges. We analyze the similarities of these seven orders by the Spearman correlation coefficient and by their inversion numbers and have found that all of these seven orders have great similarities. In other words, if we interpret the orders as scoring of the importance of the vertices in the consensus connectome, then the scores of the vertices will be similar in all seven orderings. That is, important vertices of the human connectome typically have many neighbors, connected with long and thick axonal fibers (where thickness is measured by fiber numbers), and their incident edges have high maximum and average values of length and fiber-number parameters, too. Therefore, these parameters may yield robust ways of deciding which vertices are more important in the anatomy of our brain circuitry than the others.

\end{abstract}

\maketitle

\bigskip
\noindent Running head: Circuitry-Based Scores of Human Brain Areas

\bigskip
\noindent {\bf Keywords:} Connectome; importance of cerebral areas; brain circuitry 
\medskip
	
\section*{Introduction} 
Identifying the most important nodes in large networks solely from their graph-theoretical properties was an important problem in the late 1990s, applied in scoring the web search engine hits. The most well-known solutions, the PageRank of Google \cite{Brin98theanatomy} and the HITS algorithm of Kleinberg \cite{Kleinberg1998}, fundamentally influenced the related areas. 

Both the PageRank and the HITS algorithms score the nodes of a directed, unweighted graph, originally corresponded to the graph of the World Wide Web, but later, those algorithms were successfully applied for directed and undirected biological, social and chemical graphs, among other applications \cite{Ivan2011,Grolmusz2012,Banky2013, Grolmusz2015}.

Since all human activities are governed by the cooperation of the cells in our brain, the study of the connections of these cells has specific interest. Unfortunately, the connections of the 80 billion neurons of the human brain are not mapped yet, and will not be mapped in the foreseeable future: to date, the only adult organism with completely mapped connections between its neurons (also called the connectome or braingraph) is the nematode {\it C. elegans}, having only 302 neurons \cite{White1986}. Recently, after many years of concentrated efforts, the neuronal-level connectome of a part of the brain of the adult fruit fly {\it Drosophila melanogaster}, its central brain, is mapped and published \cite{Xu2020}. Out of the 100,000 neurons of the fruit fly, the central brain contains around 25,000 neurons. The whole {\it Drosophila melanogaster} connectome is not published yet. 

Instead of the neuronal-level connections, the imaging methods are capable today of mapping the human connectome on a much coarser scale than the level of the neurons. Due to the technical developments of magnetic resonance imaging (MRI) in the last fifteen years \cite{Hagmann2008,Hagmann2012}, today we can map the macroscopic connections between 1000 anatomically identified brain areas. These developments have opened up a new area of brain science called ``connectomics'', which examines the connections between the brain areas, and, instead of comparing the volumes of brain areas between healthy or diseased, old and young, male or female subjects, as in hundreds of previous cerebral volumetric studies, it concentrates to a more central question: the connections between those areas. 

Our research group has studied the mathematical properties of human connectomes by applying strict graph-theoretical methods, terms, and approaches. We have used the public release imaging data sets of the Human Connectome Project \cite{McNab2013}, and prepared publicly available braingraphs from the imaging data, downloadable at the address \url{https://braingraph.org} in five different resolutions \cite{Kerepesi2016b,Szalkai2016d,Varga2020,Keresztes2020}. The vertices of the braingraphs correspond to the anatomically identified areas of the cortical and sub-cortical gray matter, and two of the vertices are connected by an edge if the tractography phase \cite{Daducci2012, Tournier2012} of the processing identified axonal fibers between the areas, mapped to the vertices. 

Using the exact methods and deep algorithms and approaches of graph theory, we have discovered numerous connectomical properties, related to the human sex differences \cite{Szalkai2015b,Szalkai2015c,Szalkai2016a,Keresztes2019,Keresztes2022a}, early brain development \cite{Kerepesi2015b,Szalkai2016e,Kerepesi2016,Szalkai2016d}, different lobal structures and organizations \cite{Kerepesi2015a, Szalkai2017c,Szalkai2016c}, and frequent edge sets in the whole brain or only those which are adjacent to the hippocampus \cite{Fellner2017,Fellner2018,Fellner2019,Fellner2019a}.

In the present contribution, we consider an averaged consensus connectome, computed from the imaging data of 1064 subjects, and we order the vertices of the consensus braingraph, intended to catch their order of ``importance'' and compare the order of vertices in those lists. Our main result is that orders generated from the 
\begin{itemize}
\item degree of the nodes, 
\item the sum of the numbers, 
\item the maximum number,
\item and the average number of fibers in the incident edges,
\item the sum of the fiber lengths,
\item the maximum fiber length,
\item the average fiber length in the incident edges
\end{itemize}

are similar to one another, their Spearman's rank correlation is high, and their inversion numbers are low.

This result means that ordering by any of the seven parameters above produces similar orders of the nodes, where the ``similar'' word is explained in detail later in this work.

In other words, the result can be interpreted that the most important nodes in our braingraph statistically have numerous and long incident axonal fibers, with high maximum and averaged values either for the length or for the fiber numbers. That is, if a node is in front of others in one of the seven parameters above, then, typically, it will have high values in the remaining six parameters, too. Therefore, all of these seven orders are robust in comparison with the other six ones.

In what follows, we describe precisely our methods and results.

\section*{Methods}

\subsection*{Graph construction} The data source of the present work is the 1200-subject public release of the Human Connectome Project \cite{McNab2013}. The 3 Tesla diffusion magnetic resonance imaging data were processed with the help of the Connectome Mapper Tool Kit \cite{Daducci2012}. 

We have computed five different graphs for each subject with 83, 129, 234, 463, and 1015 nodes, where each node corresponded to an anatomic area of the cortical- and sub-cortical gray matter. The parcellation tool FreeSurfer was applied here \cite{Fischl2012,Desikan2006,Tournier2012}. 

The details of the workflow we followed are described in \cite{Varga2020}. Concisely, the axonal fibers were mapped by the MRtrix 0.2 tractography software, and repeated ten times for each subject. We connected two graph vertices, which corresponded to two gray matter areas, by an edge if, in all the 10 runs, axonal fibers were found running between the two areas. In this case, the maximum and the minimum number of fibers were deleted, and the remaining eight integer values were averaged and assigned to the edge as the fiber number weight. The length of the edge is also determined as the average length of the defining fibers. Consequently, all graph edges carry a positive weight (meaning the average of 8 fiber numbers) and a positive length (in millimeters). 

Next, we constructed one single consensus graph on 1015 vertices from the 1064 individual graphs as follows. We have averaged the weight and the length for each edge, but we have followed different strategies. For averaging the weight, we added up the edge weight in all subject's graph and divided the sum by 1064; if an edge was not present in a subject, then we counted it as an edge with (an artificial) weight of 0. In the case of computing the average length, we counted the existing edges and divided the length-sum by this integer (for vertex pairs, which do not appear as an edge, 0 lengths were assigned).

Consequently, if $\#\{i,j\}$ denotes the number of appearance of edge $\{i,j\}$, and  $s_{i,j,k}$ and $h_{i,j,k}$ denote in subject $k$ the weight and the length of edge $\{i,j\}$, respectively, then
	
	\begin{equation}
		s_{i,j}={1\over 1064}\sum_{k=1}^{1064} s_{i,j,k}
	\end{equation}
	
	\begin{equation}
		h_{i,j}={1\over{\#(i,j)}}\sum_{k=1}^{1064} h_{i,j,k}
	\end{equation}

We note that our earlier works \cite{Szalkai2015a,Szalkai2016} also describe parameterizable consensus graphs by user-selectable parameters at the website of the Budapest Reference Connectome \url{https://pitgroup.org/connectome}. In contrast, the dataset of the present contribution is a static graph.

\subsection*{Ordering the nodes}

Here we consider seven different orders on the set of the 1015 vertices of our consensus graph, with abbreviations:

\begin{itemize}
\item by the degree of the nodes (Degree); 
\item by the sum of the number of fibers in the incident edges (SUM-weight);
\item by the maximum number of fibers in the incident edges (MAX-weight);
\item by the average number of fibers in the incident edges (AVG-weight);
\item by the sum of the fiber lengths in the incident edges (SUM-length);
\item by the maximum fiber length in the incident edges (MAX-length);
\item by the average fiber length in the incident edges (AVG-length).
\end{itemize}

The orders are defined by the decreasing values of these parameters.

The Degree describes the number of the incident edges, i.e., more important nodes are believed to be connected to many other nodes, so, consequently, their degree should be larger than the degree of the less important nodes.

SUM-weight is the weighted version of the Degree parameter. Here the fiber numbers of the incident edges are added up. Edges with higher fiber number or weight may connect more small gray matter areas than those with less weight. Consequently, we think that the SUM-weight parameter is more relevant in deciding the importance of a node than just the Degree.

MAX-weight -- in a certain sense -- is a simplification of the SUM-weight parameter. It describes only the weight of the largest-weight incident edge instead of adding up all the weights of the incident edges. Theoretically, it may happen that the ordering according to the MAX-weight differs a lot from the order defined by SUM-weight if, in many vertices, the incident edges have a small number of large a large number of small weights. It turns out later that in the case of our graph, it is not true; the orders are similar.

AVG-weight: Clearly, for each vertex, the Degree times AVG-weight is the SUM-weight. Therefore, the AVG-weight-based vertex-order may differ strongly from both the Degree-order and the SUM-weight order. As we show, the AVG-weight based order is also similar to the Degree and to the SUM-weight order.

SUM-length is the length-weighted version of the Degree. The Degree and the SUM-length values may differ a lot if a node is adjacent to many other vertices by short edges or few other nodes but with very long edges. If an important node usually has numerous and long incident edges, then the orders by Degree and SUM-length would not differ a lot. We show that in the case of our consensus braingraph, this is the situation. 

MAX-length can be large, while SUM-length is small, so the order according to these parameters can be different in numerous positions. We show that this is not the case in our graph.

AVG-length times the Degree is the SUM-length for each vertex. Therefore, the AVG-length-based order can strongly differ from both the Degree and the SUM-length based order. We show the opposite for the case of the consensus braingraph.

The seven orders are explicitly given in the Appendix.

In the following subsections, we introduce two tools for the analysis of the similarity of these orders: the Spearman correlation and the inversion numbers.

\subsection*{Spearman's rank correlation coefficient}

The Spearman $\varrho$ coefficient \cite{Spearman} is an ideal tool for comparing different orders on the same base set. In our case, the base set is the set of vertices, and the seven different orders are defined by the seven parameters Degree, SUM-weight, MAX-weight, AVG-weight, SUM-length, MAX-length, and AVG-length. 

The $\varrho$ coefficient gives information about the correlation of two attributes, using the two orderings by the two attributes of the elements. Two indices are associated with every element, telling what its index is in the given ordering. There is a simple equation that calculates the coefficient if the indices are unique, meaning that no two attributes are the same. (Luckily, this is true for the consensus braingraph.) If we calculate two attributes of $n$ elements and $d_i$ is the difference of the $i$-th element's two indices, then:
	\begin{equation}
		\varrho=1-\frac{6\cdot\sum_{i=1}^{n} d_{i}^2}{n^3-n}
	\end{equation}

The value of Spearman's correlation coefficient $\varrho$ satisfies $-1\leq \varrho \leq 1$, where $\varrho=1$ means the perfect correlation and $\varrho=-1$ means the perfect opposition. 

\begin{remark} The bounds for $\varrho$ can be proven by using the cubic formula for the sum of the first $n$ square numbers: $\frac{n(n+1)(2n+1)}{6}$. In the case of perfect opposition, we should calculate the sum of the first $\frac{n}{2}$ odd square numbers. This can easily be acquired from the sum of (not counted) even square numbers, as this is exactly four times the sum of the first that many square numbers. 
\end{remark}

\begin{remark}
The closer the coefficient $\varrho$ is to $0$, the less we can say about the predicted correlation. As $n$ grows, the coefficients with smaller absolute values can also be significant. So the $p$-value is not only defined by $\varrho$, but it also depends on $n$. The acquired $p$-value is the probability of the correlation being that extreme under the assumption that the null hypothesis is true.
\end{remark}
	
\subsection*{Inversion numbers}
	
	\begin{definition}
		In two given permutations of $n$ elements, two different elements are \textit{in inversion with each other} if their order is opposite in the permutations.
	\end{definition}
	
	\begin{definition}
		Two permutations' \textit{inversion number} is the number of (not ordered) pairs of elements which are in inversion. An element's \textit{inversion} is the number of elements with which it is in inversion.
	\end{definition}
	
	\begin{lemma}
		The expected value of the inversion number of two permutations of length $n$ is $\frac{n(n-1)}{4}$.
	\end{lemma}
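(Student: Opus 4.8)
The plan is to prove this by linearity of expectation, handling each pair of elements separately. First I would note that the inversion number depends only on the \emph{relative} arrangement of the two permutations, so I may assume without loss of generality that the first permutation is the identity and that the second is drawn uniformly at random from all $n!$ permutations of the $n$ elements; this uniform model is the one implicit in the stated expectation. For each unordered pair $\{i,j\}$ of distinct elements I then introduce an indicator random variable $X_{ij}$ equal to $1$ exactly when $i$ and $j$ are in inversion (in the sense of the preceding definition) and $0$ otherwise. By the definition of the inversion number, the total inversion number is the sum $I=\sum_{\{i,j\}} X_{ij}$ over all $\binom{n}{2}$ unordered pairs.

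The key step is to compute $\E[X_{ij}]=\Pr[X_{ij}=1]$ for a single pair. Because the second permutation is uniform, its restriction to the two chosen elements places $i$ before $j$ and $j$ before $i$ with equal probability $\tfrac12$ each; relative to the fixed order of $i$ and $j$ in the identity permutation, exactly one of these two relative orders is an inversion. Hence $\Pr[X_{ij}=1]=\tfrac12$ for every pair. Linearity of expectation (which needs no independence, and indeed the $X_{ij}$ are not independent) then gives
\[
\E[I]=\sum_{\{i,j\}}\E[X_{ij}]=\binom{n}{2}\cdot\frac12=\frac{n(n-1)}{2}\cdot\frac12=\frac{n(n-1)}{4},
\]
which is the claimed value.

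The only delicate point, and the place I would take most care, is justifying the symmetry $\Pr[X_{ij}=1]=\tfrac12$. The cleanest rigorous argument is a pairing (involution) one: the map that reverses the order of all $n$ elements of the second permutation is a bijection of the permutation set that flips the relative order of every pair, and in particular interchanges the event $\{X_{ij}=1\}$ with its complement; since a bijection preserves cardinality, each of the two events contains exactly half of the $n!$ permutations. With this symmetry established, the remainder of the proof is the routine counting displayed above.
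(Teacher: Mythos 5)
Your proof is correct and takes essentially the same route as the paper: linearity of expectation over the $\binom{n}{2}$ unordered pairs, with a symmetry (involution) argument showing each pair is inverted with probability $\tfrac12$. The only cosmetic difference is the choice of bijection justifying that symmetry --- the paper pairs each permutation with its $(i,j)$-transposition, while you use the global reversal map (the same device the paper employs in its alternative proof of the subsequent corollary) --- but both arguments are sound and serve the identical purpose.
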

	
	\begin{proof}
		Look at an arbitrary unordered $\{i,j\}$ pair of elements. Because of symmetric reasoning, the expected value of the contribution of this pair to the inversion number is $\frac{1}{2}$. (As every permutation has a bijective pair which differs only in the $(i,j)$ transposition. Transposition is the function that only swaps two elements in a permutation.) Since $\binom{n}{2}=\frac{n(n-1)}{2}$ unordered pair of elements exist in the set of $n$ elements, by using the linearity of the expectation, we get that the expected value of the inversion number is the desired $\frac{1}{2}\cdot\frac{n(n-1)}{2}=\frac{n(n-1)}{4}$.
	\end{proof}
	
	\begin{corollary}
		The expected value of the inversion of any element is $\frac{n-1}{2}$.
	\end{corollary}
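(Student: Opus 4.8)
The plan is to obtain this corollary from the Lemma by a double-counting identity combined with a symmetry argument, rather than by repeating the per-pair probability computation. First I would record the elementary relation between the two notions defined above: the inversion number counts each unordered inverted pair exactly once, whereas summing each element's inversion over all $n$ elements counts every inverted pair $\{i,j\}$ twice, once toward $i$ and once toward $j$. Hence, deterministically for any fixed pair of permutations,
\[
\sum_{i=1}^{n}(\text{inversion of }i)=2\cdot(\text{inversion number}).
\]
Taking expectations of both sides and substituting the value $\frac{n(n-1)}{4}$ supplied by the Lemma yields $\sum_{i=1}^{n}\E[\text{inversion of }i]=\frac{n(n-1)}{2}$.

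Second, I would argue that all $n$ summands on the left are equal. Relabelling the elements induces a bijection of the underlying probability space that preserves the uniform distribution on pairs of permutations and carries the inversion count of one element onto that of any other; therefore $\E[\text{inversion of }i]$ does not depend on $i$. Writing this common value as $E$, the identity becomes $nE=\frac{n(n-1)}{2}$, so $E=\frac{n-1}{2}$, as claimed.

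As an alternative I could mirror the Lemma's own proof directly: fix an element $i$, and for each of the remaining $n-1$ elements $j$ observe that the probability of $i$ and $j$ being in inversion is $\tfrac12$ by the same transposition argument used in the Lemma; linearity of expectation over the $n-1$ partners then gives $(n-1)\cdot\tfrac12$ at once. There is no genuine obstacle in either route; the only points deserving care are the factor of $2$ in the double-counting identity and the explicit verification that the symmetry really forces all per-element expectations to coincide. I would present the double-counting derivation as the main argument, since it most transparently displays the statement as a consequence of the Lemma.
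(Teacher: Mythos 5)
Your proposal is correct and follows essentially the same route as the paper: the paper's own proof also divides twice the Lemma's value $\frac{n(n-1)}{4}$ by the $n$ elements (the same double-counting of each inverted pair toward both of its members), and your fallback argument matches the paper's stated alternative proof via the per-pair probability $\tfrac12$. Your only addition is making explicit the symmetry step ensuring all per-element expectations coincide, which the paper leaves implicit.
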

	
	\begin{proof}
		The linearity of the expected value can be used again, but now for the result of Lemma 1. As there are $n$ elements and each inversion is counted twice (for both elements of the pair), the expected value by elements is $\frac{n(n-1)}{4}\cdot\frac{2}{n}=\frac{n-1}{2}$.
	\end{proof}
	
	\noindent \textit{Alternative proof.} Make a bijection between the permutations: the pair of a permutation is the opposite permutation. In every pair of permutations, every unordered pair of elements $\{i,j\}$ has each of its two orderings in exactly one of the members of the permutation pair. So for every element $i$, there are $n-1$ different $j$ elements, and the expected value of their inversion one-by-one is $\frac{1}{2}$. So the expected value of the inversion of the arbitrary element $i$ is $\frac{n-1}{2}$. \hspace*{\fill} $\square$

\section*{Discussion and Results}

\subsection*{Spearman-correlations of different orderings}
	
	\begin{center}
		
		\textbf{Correlations with the degree}
		
		\small
		\begin{tabular}{ |c|c|c|c|c|c|c|c| } 
			\hline
			&  \multicolumn{6}{c|}{Degree} \\
			\hline
			& SUM-weight & MAX-weight & AVG-weight & SUM-length & MAX-length & AVG-length \\
			\hline
			$\varrho$ & $0.88$ & $0.84$ & $0.79$ & $0.98$ & $0.51$ & $0.76$ \\
			$p$ & $0.0$ & $10^{-278}$ & $10^{-220}$ & $0.0$ & $10^{-68}$ & $10^{-194}$ \\
			\hline
		\end{tabular}
		
		\smallskip
		
		\textit{Table 1} Spearman-correlations between the Degree-based and the six other orders. The first row contains the $\varrho$ coefficients, and the second the significance-characterizing p values. We note that the weakest correlation in the case of MAX-length is still very far from 0, and its p-value is very small. 
		\normalsize
		
		\bigskip
		
		\textbf{Correlations with SUM-weight and MAX-weight}
		
		\small
		\begin{tabular}{ |c|c|c|c|c|c|c|c| }
			\hline
			&  \multicolumn{3}{c|}{SUM-weight} & \multicolumn{3}{c|}{MAX-weight}\\
			\hline
			& SUM-length & MAX-length & AVG-length & SUM-length & MAX-length & AVG-length \\
			\hline
			$\varrho$ & $0.86$ & $0.42$ & $0.63$ & $0.83$ & $0.41$ & $0.62$ \\
			$p$ & $10^{-302}$ & $10^{-45}$ & $10^{-112}$ & $10^{-262}$ & $10^{-43}$ & $10^{-110}$ \\
			\hline
		\end{tabular}
		\smallskip
		
		\textit{Table 2} Spearman-correlations between the SUM-weight and MAX-weight orders and the three length-based orders. The first row contains the $\varrho$ coefficients, the second the significance-characterizing p-values. We note that the weakest correlations in the case of MAX-length are still very far from 0, and their p-values are very small. 
		\bigskip
		
		\normalsize
		\textbf{Correlations with AVG-weight}
		
		\small
		\begin{tabular}{ |c|c|c|c| }
			\hline
			&  \multicolumn{3}{c|}{AVG-weight} \\
			\hline
			& SUM-length & MAX-length & AVG-length \\
			\hline
			$\varrho$ & $0.77$ & $0.37$ & $0.54$ \\
			$p$ & $10^{-199}$ & $10^{-33}$ & $10^{-79}$ \\
			\hline
		\end{tabular}
		
		\smallskip
		
		\textit{Table 3} Spearman-correlations between the AVG-weight order and the three length-based orders. The first row contains the $\varrho$ coefficients, the second the significance-characterizing p values. We note that the weakest correlation in the case of MAX-length is still very far from 0, and its p-value is very small. 
		\normalsize
		
		\bigskip
		
		\textbf{Correlations between weight-weight and length-length orders}
		
		\small
		\begin{tabular}{ |c|c|c|c|c|c|c|c| } 
			\hline
			&  \multicolumn{3}{c|}{weight based orders} & \multicolumn{3}{c|}{length based orders}\\
			\hline
			& SUM-MAX& SUM-AVG & MAX-AVG & SUM-MAX & SUM-AVG& MAX-AVG \\
			\hline
			$\varrho$ & $0.97$ & $0.98$ & $0.96$ & $0.58$ & $0.86$ & $0.70$ \\
			$p$ & $0.0$ & $0.0$ & $0.0$ & $10^{-91}$ & $10^{-296}$ & $10^{-148}$ \\
			\hline
			
		\end{tabular}
		
		\smallskip
		
		\textit{Table 4} Spearman correlations and p-values between the weight-weight based and the length-length-based orders. All the correlations are high, and the lowest value belongs to the SUM-length vs. MAX-length correlations.
		\normalsize
		
	\end{center}
	
	\smallskip

\subsection*{A simple control}

For a simple control, we have computed the Spearman correlation between two obviously unrelated orders. Namely, we have taken the ordinal numbers of the vertices assigned by the parcellation software and the AVG-weight-defined order. The ordinal numbers are assigned in the way that around the first 500 numbered vertices are situated in the left and the second 500 vertices in the right hemisphere of the brain, in the same order.  For these two orders $\varrho=0.01$ and $p=0.65$, therefore, our results in Tables 1-4 present a biological rule.  

\subsection*{Analysis of the order-similarity by inversion numbers}

In the Methods section, we have defined the inversion numbers and listed some of their fundamental properties. Here we present a graphical evaluation of the inversion numbers between the pairs of the seven orders studied by the Spearman correlation in the previous section.

\begin{figure}[H]
	\centering
	\includegraphics[scale=0.8]{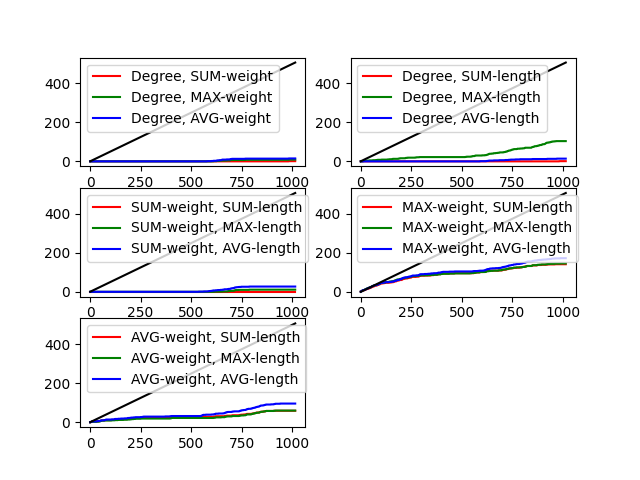}
	\caption{The number of vertices with high inversion-numbers according to distinct parameter-pairs. Point $n$ on axis $x$ correspond to the most important $n$ vertices in one of the pair-defined orders, while the height (i.e. the $y$ coordinate) of the point correspond to the number with higher-than-expectation inversion number between the most important $n$ pairs of orderings. On each panel the black line shows the $n/2$ expectation. It is easy to see on all panels that the higher-than-expectation inversion numbers appear in very few pairs, almost independently from the examined pairs of orderings.}

	\end{figure}

\section*{Conclusions}
We have analyzed the order of vertex-importance in the anatomically labeled consensus graph of the human brain, defined by circuitry-based parameters of the vertices: the degree (Degree), and the following parameters, computed on the incident edges for the vertex: Sum of fiber counts (SUM-weight), Max of fiber counts (MAX-weight), Average of fiber counts (AVG-weight), Sum of fiber lengths (SUM-length), Max of fiber lengths (MAX-length) and the Average of fiber lengths (AVG-length). For the analysis, we have used the Spearman correlation coefficient and the inversion numbers between the orders. We have found that the seven orders f vertex importance, defined by these seven circuitry-based parameters of the vertices, have a great similarity: i.e., the most important vertices - statistically - have many neighbors, connected with long and numerous fibers. We also have shown that orders defined by the maximum weight or length of the incident edges or the orders defined by the average weight or length of the incident edges do not differ very much from the orders defined by the sum of these parameters.

The results show the robustness of the orders by these seven parameters and also shows that vertex-importance in the human brain can be characterized by numerous parameters, but the list of the important vertices (or anatomical brain areas) will not be changed much.



\section*{Appendix}

In this section we list the orderings of the 1015 vertices of our consensus braingraph by the examined weight parameters. The nodes are denoted by the cerebral areas, which they are corresponded to. The columns contain the orderings by different parameters, denoted in the column header.

\includepdf[pages=1-23]{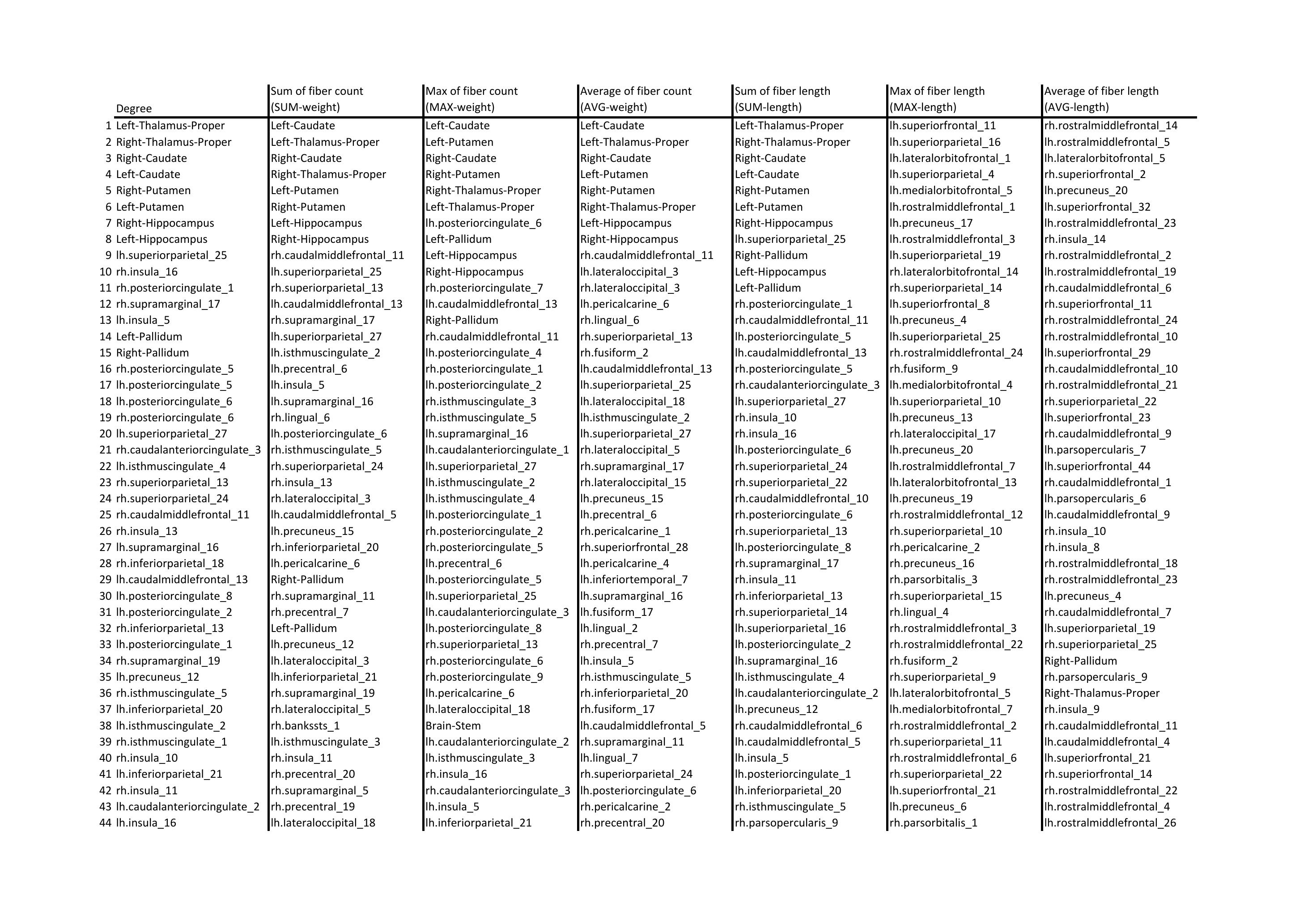}

\end{document}